\newtheorem{theorem}{Theorem}
\newtheorem{lemma}{Lemma}
\begin{document}
\title{\vspace*{-.5cm}\textbf{Optimal Cuts and Bisections
on the Real Line\\[.5ex] in Polynomial Time\\[1.5ex]} }
\author{
  Marek Karpinski\thanks{Research supported partly by DFG grants and the Hausdorff Center grant~EXC59-1. Department of Computer Science, University of Bonn. Email:~{marek@cs.uni-bonn.de}}
  \and
  Andrzej Lingas\thanks{Research supported in part by VR grant 621-2008-4649. Department of Computer Science, Lund University. Email:~{Andrzej.Lingas@cs.lth.se}}
  \and
  Dzmitry Sledneu\thanks{Centre for Mathematical Sciences, Lund
    University. Email:~{Dzmitry.Sledneu@math.lu.se}.}
}
\date{}

\maketitle

\begin{abstract}

The exact complexity of geometric cuts  and bisections is the longstanding open problem including even the dimension one.
In this paper, we resolve this problem for dimension one (the real line) by designing an exact polynomial time algorithm.
Our results depend on a new technique of dealing with metric equalities and their connection to dynamic programming.
The method of our solution could be also of independent interest.
\end{abstract}

\section{Introduction}

The metric MAX-CUT, MAX-BISECTION, MIN-BISECTION and other Partitioning problems were all proved to have \emph{polynomial time approximation schemes}~(PTAS)~\cite{DK98,FK98,AFKK03,FKK04,FKKV05,R10}.
The above problems are known to be NP-hard in exact setting.
The status of those problems for geometric (thus including Euclidean) metrics and this even for dimension one was widely open.

In this paper we resolve the status of those problems for just dimension one by giving a polynomial time algorithm.
Our solution, somewhat surprisingly, involves certain new ideas for applying dynamic programming which could be also of independent interest.

\section{Preliminaries and General Setting}
We shall define our dynamic programming method
in terms of generalized subproblems
on finite multisets of reals
generalizing slightly a geometric metric setting.

For a partition of a finite multiset $P$ of reals
into two multisets $P_1$ and $P_2$, the {\em value of the cut}
is the total length of all intervals on the real line
that have one endpoint in $P_1$ and the other one in $P_2.$

The MAX-CUT problem for $P$ will be now to find a partition
of $P$ into two multisets that maximizes the value of the cut.
If $|P|=n$ and the two multisets are additionally required to be
of cardinality $k$ and $n-k,$ respectively,
then we obtain the $(k,n-k)$ MAX-PARTITION
problem for $P.$ In particular,
if $n$ is even and $k=n/2$ then
we have the MAX-BISECTION problem.
Next, if we replace the requirement of
maximization with that of minimization
then we obtain the $(k,n-k)$ MIN-PARTITION
and MIN-BISECTION problems
for $P$, respectively.

In this paper we study geometric instances of the above problems in dimension one (the real line) which could be rephrased as the problems of partitioning arbitrary finite metric spaces of this dimension.

\section{The Algorithm}

The global idea behind our algorithm is as follows.
We guess how many of the copies of 
the rightmost real are respectively
in the first and second set and move them to
the next to the rightmost real. We also guess
how many copies of the reals in the remaining
part of the multiset are in the first and
second set respectively. Having this information,
we can compute exactly the difference between
the value of optimal solution for the whole
input multiset and that for the multiset resulting
from the movement under the guessed partition proportions.
The difference can be easily evaluated
due to the triangle equality that holds on the real line.
Now, we solve the problem for the transformed
multiset whose elements are copies of
the shrunk set of reals recursively
under the guessed partition proportions, after additionally
guessing the proportions of the partition
for the original copies of the next to the right real.
Eventually, we end up with the trivial multiset
where all elements are copies of the leftmost real.
We eliminate guesses by an enumeration of all possibilities
and choosing the best one. Next, we solve the
resulting subproblems in bottom up fashion
instead the top down to obtain a polynomial time
solution.

Consider a finite multiset $P$ of
reals. We assume that $|P|=n$
and that $P$ consists of copies
of $l\le n$ distinct reals.
For $i=1,...,l,$
let $x_i$ denote the $i$-th smallest real whose copy
is in $P,$ and let $P_i$ denote the sub-multiset
of $P$ consisting of all elements of $P$
which are copies  of reals in $\{ x_1,...,x_{i}\}.$
For convention, we assume $P_0=\emptyset .$

We shall consider a family
of generalized subproblems $S_i(p,q,r,t),$
where $i\in \{1,...,l\},$ $p,q,r,t$ are integers in $\{0,...,n\}$
such that $p+q=|P_{i-1}|$ and $r+t=n- |P_{i-1}|$.
The subproblem $S_i(p,q,r,t)$ is to find for a multiset
that is the union of $P_{i-1}$ with $r+t$ copies of $x_i$
a partition into two multisets 
such that $p$ elements of $P_i$
and $r$ copies of $x_i$ form the first set and the
value of the cut is maximized. The value
of such a maximum cut is denoted by $MAXCUT(S_i(p,q,r,t))$.

\begin{figure}[ht!]
  \centering
  \includegraphics{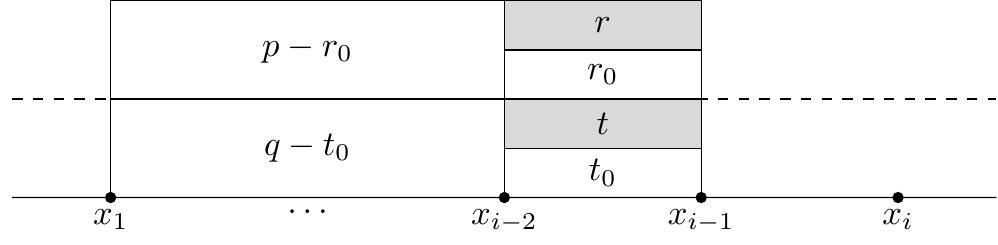}
  \caption{$S_{i-1}(p-r_0,q-t_0,r_0+r,t_0+t).$}
\end{figure}

\begin{figure}[ht!]
  \centering
  \vspace*{.75cm}
  \includegraphics{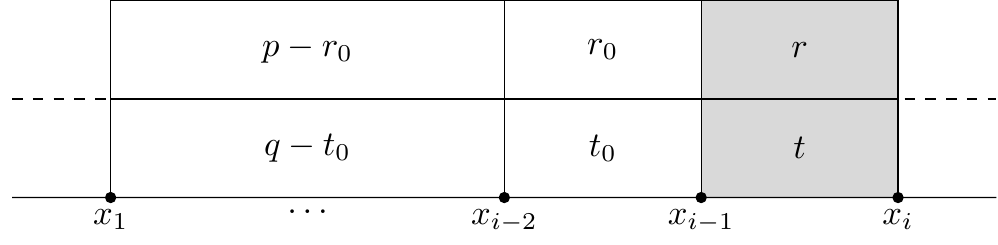}
  \caption{$S_i(p,q,r,t).$}
\end{figure}

\bigskip

\begin{lemma}\label{lem: rec}
For $i\ge 2,$
$MAXCUT(S_i(p,q,r,t))=(x_i-x_{i-1})(pt+qr)+$\newline
$+\max_{r_0\le p \wedge t_0\le q \wedge r_0+t_0=|P_{i-1}|-|P_{i-2}|} 
MAXCUT(S_{i-1}(p-r_0,q-t_0,r_0+r,t_0+t)).$
\end{lemma}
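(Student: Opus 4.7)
The plan is to set up a bijection between partitions feasible for $S_i(p,q,r,t)$ and partitions feasible for $S_{i-1}(p-r_0,q-t_0,r_0+r,t_0+t)$ for an appropriate choice of $r_0,t_0$, and to show that the cut values on the two sides differ by exactly $(x_i-x_{i-1})(pt+qr)$. Taking the maximum then yields the recurrence.

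First, I would fix any partition realising $MAXCUT(S_i(p,q,r,t))$. Of the $p$ elements of $P_{i-1}$ on the first side, let $r_0$ be the number of copies of $x_{i-1}$, so the remaining $p-r_0$ come from $P_{i-2}$; similarly let $t_0$ be the number of $x_{i-1}$-copies on the second side, with $q-t_0$ from $P_{i-2}$. Since every copy of $x_{i-1}$ lands on one of the two sides, $r_0+t_0=|P_{i-1}|-|P_{i-2}|$, and trivially $r_0\le p$, $t_0\le q$. This is exactly the index set in the maximum in the lemma.

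Next I would compare the cut value of this partition with the cut value of the ``transformed'' partition for $S_{i-1}(p-r_0,q-t_0,r_0+r,t_0+t)$ obtained by relabelling each of the $r+t$ copies of $x_i$ as a copy of $x_{i-1}$, keeping its side unchanged. Pairs whose endpoints both lie in $P_{i-2}$ are unaffected, and pairs where both endpoints are copies of $x_i$ (or both $x_{i-1}$) contribute $0$ on both sides. For the remaining pairs one endpoint is a copy of $x_i$ and the other is some $a\le x_{i-1}$; the triangle equality on the line gives $x_i-a=(x_{i-1}-a)+(x_i-x_{i-1})$, so each such crossing pair contributes exactly $(x_i-x_{i-1})$ more in $S_i$ than in the transformed $S_{i-1}$ partition. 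The count of such crossing pairs is $r\cdot q+t\cdot p$: the $r$ copies of $x_i$ on the first side are crossed against all $q$ elements of $P_{i-1}$ on the second, and symmetrically for $t$ and $p$ (the crossings against copies of $x_{i-1}$ are included because $x_{i-1}\le x_{i-1}$). Hence the cut value in $S_i$ exceeds that of the corresponding $S_{i-1}$ partition by precisely $(x_i-x_{i-1})(pt+qr)$.

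Finally, I would observe that this correspondence is a bijection: given any partition feasible for $S_{i-1}(p-r_0,q-t_0,r_0+r,t_0+t)$, one recovers a partition for $S_i(p,q,r,t)$ by declaring $r$ of the $r_0+r$ copies of $x_{i-1}$ on the first side and $t$ of the $t_0+t$ on the second side to be copies of $x_i$ instead (which copies we choose is immaterial since they sit at the same point). Thus for each admissible $(r_0,t_0)$ the two MAXCUT values differ by exactly $(x_i-x_{i-1})(pt+qr)$, and maximising over $(r_0,t_0)$ gives the claimed identity. The only delicate point is the bookkeeping for the count of crossing pairs; the coefficient $pt+qr$ is independent of the ``internal'' choice $(r_0,t_0)$, which is precisely why it can be pulled outside the maximum.
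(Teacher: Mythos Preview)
Your argument is correct and is essentially the same as the paper's: both proofs rest on sliding the $r+t$ copies between $x_i$ and $x_{i-1}$ and observing that, by the triangle \emph{equality} on the line, the cut value changes by exactly $(x_i-x_{i-1})(pt+qr)$. The only cosmetic difference is that you package the two directions as a single bijection with a constant offset, whereas the paper writes out the $\ge$ and $\le$ inequalities separately.
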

\begin{proof}
For $1\le r_0\le p \wedge 1\le t_0\le q,$
consider an optimal solution to $S_{i-1}(p-r_0,q-t_0,r_0+r,t_0+t),$
where $r_0+t_0=|P_{i-1}|-|P_{i-2}|.$
Let us move $r$ copies of the real $x_{i-1}$ in the first
set of the solution to
the real $x_{i}$ 
and $t$ copies of $x_{i-1}$ in the second
set of the solution to the same real $x_{i}.$
We obtain a feasible solution to $S_i(p,q,r,t)$
whose cut value is 
$(x_i-x_{i-1})(pt+qr)+MAXCUT(S_{i-1}(p-r_0,q-t_0,r_0+r,t_0+t)).$
It follows that 
$MAXCUT(S_i(p,q,r,t))\ge
(x_i-x_{i-1})(pt+qr)+$\newline
$+\max_{r_0\le p \wedge t_0\le q \wedge r_0+t_0=|P_{i-1}|-|P_{i-2}|} 
 MAXCUT(S_{i-1}(p-r_0,q-t_0,r_0+r,t_0+t)).$

Contrary, consider an optimal solution to
$S_i(p,q,r,t).$ Suppose that $r_0$ copies of
$x_{i-1}$ are in the first set of the solution
and $t_0$ copies of $x_{i-1}$ are in the second set
of the solution. 
Note that $r_0+t_0=|P_{i-1}|-|P_{i-2}|$ holds.
Let us move $r$ 
copies of $x_{i}$ in the first
set of the solution to
the real $x_{i-1}$ 
and $t$ copies of $x_{i}$ in the second
set of the solution to the same real $x_{i-1}.$
We obtain a feasible solution to
$S_{i-1}(p-r_0,q-t_0,r_0+r,t_0+t)$ whose cut value is
$MAXCUT(S_i(p,q,r,t))-(x_i-x_{i-1})(pt+qr).$
It follows that 
$MAXCUT(S_i(p,q,r,t))\le (x_i-x_{i-1})(pt+qr)+$\newline
$+\max_{r_0\le p \wedge t_0\le q \wedge r_0+t_0=|P_{i-1}|-|P_{i-2}|} 
MAXCUT(S_{i-1}(p-r_0,q-t_0,r_0+r,t_0+t)).$ %\qed
\end{proof}

\begin{theorem} \label{theo: min}
The 
geometric MAX-CUT problem on the real line as well as the 
geometric MAX-BISECTION problem on the real line
are solvable in $O(n^4)$ time.
\end{theorem}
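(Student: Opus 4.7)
The plan is to convert the recurrence of Lemma~\ref{lem: rec} into a bottom-up dynamic program. Observe that, once $i$, $p$, and $r$ are fixed, the remaining parameters $q = |P_{i-1}| - p$ and $t = n - |P_{i-1}| - r$ are forced by the definition of $S_i(p,q,r,t)$; thus a three-dimensional table $T[i,p,r]$ storing $MAXCUT(S_i(p,q,r,t))$ suffices. After an $O(n \log n)$ preprocessing step that sorts $P$, reads off the distinct values $x_1 < \cdots < x_l$, and records the prefix sizes $|P_i|$, I would fill $T$ in order of increasing $i$.

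The base case $i = 1$ forces $p = q = 0$, and the underlying multiset consists of $r + t$ copies of $x_1$; since all elements coincide, every cut has value $0$, so $T[1, 0, r] = 0$ for every admissible $r$. For $i \ge 2$, I would apply Lemma~\ref{lem: rec} directly: for each legal triple $(i, p, r)$, iterate over the pairs $(r_0, t_0)$ with $r_0 + t_0 = |P_{i-1}| - |P_{i-2}|$, $r_0 \le p$ and $t_0 \le q$, look up the previously computed value $T[i-1, p - r_0, r_0 + r]$, add the closed-form term $(x_i - x_{i-1})(pt + qr)$, and take the maximum.

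For the complexity count, the number of cells $(i, p, r)$ is bounded by $l \cdot (n+1)^2 = O(n^3)$, and each cell is filled with at most $|P_{i-1}| - |P_{i-2}| + 1 \le n + 1$ table lookups and arithmetic operations, so the total running time is $O(n^4)$. To obtain the MAX-CUT answer I would return $\max_{p,r} T[l, p, r]$; for MAX-BISECTION I would take the same maximum but restricted to pairs with $p + r = n/2$. Both final scans cost $O(n^2)$ and do not affect the overall bound.

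The only nontrivial point, as opposed to the routine state-count, is justifying that restricting the final extraction to $p + r = n/2$ correctly captures the bisection constraint. This is immediate from the definition of $S_l(p,q,r,t)$: the quantity $p + r$ is exactly the cardinality of the ``first set'' of the partition that the subproblem represents, so enforcing balance at the top is equivalent to enforcing it throughout, and no extra dimension is needed in the table. Hence a single $O(n^4)$ pass computes both optima.
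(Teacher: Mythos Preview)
Your proposal is correct and follows essentially the same route as the paper: a three-parameter dynamic-programming table driven by Lemma~\ref{lem: rec}, with the base case $MAXCUT(S_1(0,0,r,t))=0$, an $O(n^3)$ state count, an $O(n)$ inner loop over $r_0$, and a final scan over $T[l,\cdot,\cdot]$ (restricted to $p+r=n/2$ for the bisection). The paper additionally remarks that the actual optimal partition, not just its value, is recovered by standard backtracking through the table; you may want to add a sentence to that effect, since the problems are phrased as ``find a partition'' rather than ``compute the optimum value''.
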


\begin{proof}
First, we shall show that
the subproblems $S_i(p,q,r,t),$
where $i\in \{1,...,l\},$ $p,q,r,t$ are integers in $\{0,...,n\}$
such that $p+q=|P_{i-1}|$ and $r+t=n- |P_{i-1}|$
are solvable in $O(n^4)$ time.

Since $p+q=|P_{i-1}|$, there are
$O(n)$ choices for the parameters $p,\ q,$
and similarly since $r+t=n- |P_{i-1}|$,
there are $O(n)$ choices for the
parameters $r,\ t.$ It follows that
the total number of considered subproblems
is $O(n^3).$

We can compute the values of $MAXCUT(S_i(p,q,r,t))$
in bottom up
fashion in increasing $i$ order.
If $i=1,$ then $p=0$ and $q=0$ and
$MAXCUT(S_i(p,q,r,t))=0$ holds trivially.
For $i\ge 2,$ we apply Lemma~\ref{lem: rec}
in order to compute
the value of  $MAXCUT(S_i(p,q,r,t))$ in $O(n)$ time
by $r_0+t_0=|P_{i-1}|-|P_{i-2}|$.
The corresponding optimal solutions
can be obtained by backtracking.
 The upper bound $O(n^4)$ follows. 

Let $x_l$ be the largest real whose copy is in
$P,$ and let $m$ be the number
of copies of $x_l$ in $P.$ 
The optimal solution to 
the geometric MAX-CUT problem for $P$
can be found
among the optimal solutions to the $O(n^2)$
subproblems $S_k(p,q,r,t)$, where
$p+q=|P_{k-1}|$ and $r+t=m.$
Furthermore, the optimal solution
to the geometric $(k,n-k)$ MIN-PARTITION problem
for $P$ can be found among
the optimal solutions to the $O(n)$
subproblems $S_k(p,q,r,t)$, where
$p+q=|P_{k-1}|$, $r+t=m,$ $p+r=k$ and $q+t=n-k.$ %\qed
\end{proof}

To solve the $(k,n-k)$ MIN-PARTITION
problem on the real line, we 
consider an analogous family
of generalized subproblems $U_i(p,q,r,t),$
where $i\in \{1,...,l\},$ $p,q,r,t$ 
are integers in $\{0,...,n\}$
such that $p+q=|P_{i-1}|$ and $r+t=n- |P_{i-1}|$.
The subproblem $U_i(p,q,r,t)$ is to find for a multiset
that is the union of $P_{i-1}$ with $r+t$ copies of $x_i$
a partition into two multisets 
such that $p$ elements of $P_i$
and $r$ copies of $x_i$ form the first set and the
value of the cut is minimized. The value
of such a minimum cut is denoted by $MINCUT(S_i(p,q,r,t))$.

Analogously, we obtain the following counterparts
of Lemma \ref{lem: rec}
and Theorem \ref{theo: min} for $(k,n-k)$ MIN-PARTITION
and MIN-BISECTION.

\begin{lemma}
For $i\ge 2,$
$MINCUT(U_i(p,q,r,t))=$
\newline
$(x_i-x_{i-1})(pt+qr)+
\min_{r_0\le p \wedge t_0\le q \wedge r_0+t_0=|P_{i-1}|-|P_{i-2}|}  
MINCUT(U_{i-1}(p-r_0,q-t_0,r_0+r,t_0+t)).$
\end{lemma}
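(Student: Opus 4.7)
My plan is to repeat the argument of Lemma~\ref{lem: rec} with \emph{max} replaced by \emph{min} throughout. What legitimizes this replacement is that the correspondence between feasible solutions of $U_i(p,q,r,t)$ and $U_{i-1}(p-r_0,q-t_0,r_0+r,t_0+t)$ used in the proof of Lemma~\ref{lem: rec} adjusts the cut value by the \emph{same} additive constant $(x_i-x_{i-1})(pt+qr)$, independently of the choice of $r_0$ and $t_0$; such an additive shift commutes with both minimization and maximization.

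For the $\le$ direction, I would take any admissible pair $(r_0,t_0)$ (with $r_0\le p$, $t_0\le q$, $r_0+t_0=|P_{i-1}|-|P_{i-2}|$) together with an optimal solution of $U_{i-1}(p-r_0,q-t_0,r_0+r,t_0+t)$, and relocate $r$ copies of $x_{i-1}$ from the first set and $t$ copies of $x_{i-1}$ from the second set to position $x_i$. The result is a feasible solution of $U_i(p,q,r,t)$, so minimizing over $(r_0,t_0)$ gives the desired upper bound on $MINCUT(U_i(p,q,r,t))$.

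For the reverse direction, I would start from an optimal solution of $U_i(p,q,r,t)$, define $r_0$ and $t_0$ as the numbers of copies of $x_{i-1}$ placed in the first and second set respectively (which then satisfy $r_0+t_0=|P_{i-1}|-|P_{i-2}|$ automatically), and perform the inverse move, shifting the $r$ copies of $x_i$ in the first set and the $t$ copies of $x_i$ in the second set back to $x_{i-1}$. The outcome is a feasible solution of $U_{i-1}(p-r_0,q-t_0,r_0+r,t_0+t)$ whose cut value is at least $MINCUT(U_{i-1}(\cdots))$, and rearranging yields the matching lower bound.

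The only point requiring any genuine calculation, and the one I would flag as the main obstacle, is the verification that the cut value shifts by exactly $(x_i-x_{i-1})(pt+qr)$ under the move. This rests on the triangle equality on the real line: every point of $P_{i-2}$ lies to the left of $x_{i-1}$, so each relocated copy contributes $(x_i-x_{i-1})$ on each of its surviving cross-edges, and a short case split over the copies at $x_{i-1}$ that do and do not move shows that the $r_0$ and $t_0$ contributions cancel, leaving precisely $pt+qr$. Since the same computation is implicit in the proof of Lemma~\ref{lem: rec}, I would simply invoke it rather than reproduce it.
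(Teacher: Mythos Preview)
Your proposal is correct and matches the paper's own treatment exactly: the paper does not give a separate proof for this lemma but merely states that it is obtained ``analogously'' to Lemma~\ref{lem: rec}, which is precisely what you do. Your explicit check that the $r_0$- and $t_0$-terms cancel in the cut-value shift, leaving $(x_i-x_{i-1})(pt+qr)$, is in fact more detailed than anything the paper spells out.
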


\begin{theorem}
The 
geometric $(k,n-k)$ MIN-PARTITION
problem on the real line, in particular the
geometric MIN-BISECTION problem on the real line,
are analogously solvable 
by dynamic programming in $O(n^4)$ time.
\end{theorem}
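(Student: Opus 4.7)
The plan is to mirror the proof of Theorem~\ref{theo: min} step by step, changing $\max$ to $\min$ throughout and reusing the same geometric accounting. First I would establish the stated MIN analogue of Lemma~\ref{lem: rec}; then populate the DP table bottom-up; and finally extract the $(k,n-k)$ MIN-PARTITION and MIN-BISECTION answers from the last layer.

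For the lemma, the two-direction argument of Lemma~\ref{lem: rec} goes through verbatim with $\min$ in place of $\max$. For the ``$\le$'' direction, I would pick any admissible $(r_0,t_0)$ with $r_0\le p$, $t_0\le q$ and $r_0+t_0=|P_{i-1}|-|P_{i-2}|$, take an optimal solution to $U_{i-1}(p-r_0,q-t_0,r_0+r,t_0+t)$, and shift $r$ first-set copies of $x_{i-1}$ and $t$ second-set copies of $x_{i-1}$ rightward to $x_i$. The shifted partition is feasible for $U_i(p,q,r,t)$, and the cut value increases by exactly $(x_i-x_{i-1})(pt+qr)$: the $r$ shifted first-set copies each gain $(x_i-x_{i-1})$ in distance to each of the $q$ remaining second-set elements (all at positions $\le x_{i-1}$), the $t$ shifted second-set copies each gain the same amount against each of the $p$ remaining first-set elements, and the contribution between shifted first- and shifted second-set copies stays zero. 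For the ``$\ge$'' direction, I would start from an optimal solution to $U_i(p,q,r,t)$, read off the $(r_0,t_0)$ it uses at $x_{i-1}$ (which automatically satisfies $r_0+t_0=|P_{i-1}|-|P_{i-2}|$), and reverse the shift; the resulting solution is feasible for $U_{i-1}(p-r_0,q-t_0,r_0+r,t_0+t)$ with cut value $MINCUT(U_i(p,q,r,t))-(x_i-x_{i-1})(pt+qr)$, so this quantity is at least $MINCUT(U_{i-1}(p-r_0,q-t_0,r_0+r,t_0+t))$.

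Given the recurrence, the DP is identical to that of Theorem~\ref{theo: min}. There are $O(n^3)$ subproblems $U_i(p,q,r,t)$: $O(n)$ choices of $i$, $O(n)$ choices for the pair $(p,q)$ with $p+q=|P_{i-1}|$, and $O(n)$ for $(r,t)$ with $r+t=n-|P_{i-1}|$. The base case $i=1$ gives $MINCUT(U_1(0,0,r,t))=0$. For $i\ge 2$, the lemma evaluates each entry in $O(n)$ time by enumerating $r_0$ (with $t_0$ forced), for a total of $O(n^4)$. Optimal partitions are recovered by backtracking the minimizing $r_0$ at each stage.

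Finally, to read off the answer for $(k,n-k)$ MIN-PARTITION, let $x_l$ be the largest distinct real and $m$ its multiplicity; the optimum is the minimum of $MINCUT(U_l(p,q,r,t))$ over the $O(n)$ tuples satisfying $p+q=|P_{l-1}|$, $r+t=m$, $p+r=k$, and $q+t=n-k$. MIN-BISECTION is the instance $k=n/2$. I do not anticipate a genuine obstacle: the only subtlety is the cut-value bookkeeping under the shift, but it is the same calculation as in the MAX case and uses only the one-sided triangle equality $|x_i-y|=|x_{i-1}-y|+(x_i-x_{i-1})$ valid for $y\le x_{i-1}$ on the line.
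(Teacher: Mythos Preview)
Your proposal is correct and follows exactly the approach the paper intends: it simply states that Lemma~2 and Theorem~2 are obtained ``analogously'' from Lemma~\ref{lem: rec} and Theorem~\ref{theo: min}, and your write-up spells out precisely that analogy (two-sided shift argument with $\min$ in place of $\max$, $O(n^3)$ subproblems each solved in $O(n)$ time, and extraction of the $(k,n-k)$ answer at level $l$ under the constraints $p+r=k$, $q+t=n-k$).
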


\section{Final Remarks}
It remains an open problem whether our method can be generalized to higher dimensions or those problems turn out to be inherently hard.
At stake is the exact computational status of other geometric problems
for which our knowledge is very limited at the moment.

\section{Acknowledgments}
We thank Uri Feige, Ravi Kannan and
Christos Levcopoulos for a number
of interesting discussions
on the subject of this paper.

{\small
}

\begin{thebibliography}{12}

\bibitem{AFKK03}
N. Alon, W.F. de la Vega, R. Kannan and M. Karpinski.
\newblock Random Sampling and Approximation of MAX-CSPs.
\newblock J. Computer and System Sciences 67 (2003), pp.~212--243.

\bibitem{FKKV05}
W.F. de la Vega, R. Kannan, M. Karpinski and S. Vempala.
\newblock Tensor Decomposition and Approximation Schemes for Constraint Satisfaction Problems.
\newblock Proc.~37th ACM STOC (2005), pp.~747--754.

\bibitem{FK98}
W.F. de la Vega and M. Karpinski.
\newblock Polynomial Time Approximation of Dense Weighted Instances of MAX-CUT.
\newblock Random Structures and Algorithms 16 (2000), pp.~314--332.

\bibitem{FKK04}
W.F. de la Vega, M. Karpinski and C. Kenyon.
\newblock Approximation Schemes for Metric Bisection and Partitioning. 
\newblock Proc.~15th ACM-SIAM SODA (2004), pp.~506--515.

\bibitem{DK98}
W.F. de la Vega and C. Kenyon.
\newblock A Randomized Approximation Scheme for Metric MAX-CUT.
\newblock Proc. Proc.~39th IEEE FOCS (1998), pp.~468--471; journal version in J. Computer and System Sciences 63 (2001), pp.~531--541.


\bibitem{R10}
R. Kannan.
\newblock Spectral Methods for Matrices and Tensors.
\newblock Proc. 42nd ACM STOC (2010), pp.~1--12.

\end{thebibliography}
\end{document}